\theoremstyle{plain}
\newtheorem{theorem}{Theorem}
\newtheorem{proposition}[theorem]{Proposition}	
\theoremstyle{definition}
\newtheorem{definition}[theorem]{Definition}
\newtheorem{remark}[theorem]{Remark}
\renewcommand{\(}{\left(}				
\renewcommand{\)}{\right)}
\renewcommand{\[}{\left[}
\renewcommand{\]}{\right]}
\newcommand{\norm}[1]{\left\|{#1}\right\|}		
\newcommand{\dom}[1]{{\rm dom}(#1)}		
\newcommand\Cb{\mathbb{C}}	
\newcommand\Eb{\mathbb{E}}			
\newcommand\Pb{\mathbb{P}}	
\newcommand\Rb{\mathbb{R}}										
\newcommand\Ib{\mathbb{I}}
\newcommand\Ac{\mathscr{A}}
\newcommand\Bc{\mathscr{B}}
\newcommand\Fc{\mathscr{F}}
\newcommand\Hc{\mathscr{H}}
\newcommand\Oc{\mathscr{O}}
\newcommand\Pc{\mathscr{P}}
\newcommand\Sc{\mathscr{S}}
\newcommand\eps{\varepsilon}
\newcommand\Om{\Omega}
\newcommand\sig{\sigma}
\newcommand\lam{\lambda}
\newcommand\del{\delta}
\renewcommand\d{\partial}
\begin{document}


\title{The Exact Smile of certain Local Volatility Models}

\author{
Matthew Lorig
\thanks{ORFE Department, Princeton University, Princeton, USA.  Work partially supported by NSF grant DMS-0739195.}
}

\date{This version: \today}

\maketitle

\begin{abstract}
We introduce a new class of local volatility models.  Within this framework, we obtain expressions for both (i) the price of any European option and (ii) the induced implied volatility smile.  As an illustration of our framework, we perform specific pricing and implied volatility computations for a CEV-like example.  Numerical examples are provided.
\end{abstract}

\textbf{keywords}:
CEV, local volatility, stochastic volatility, implied volatility.


%
%

\section{Introduction}
\label{sec:intro}
\emph{Local volatility} models are a class of equity models in which the volatility $\sig_t$ of an asset $X$ is described by a function of {}{time} and the present level of $X$.  That is, $\sig_t = \sig({}{t},X_t)$.  Among local volatility models, perhaps the most well-known is the constant elasticity of variance (CEV) model of \cite{CoxCEV}.  An extension of the CEV model to defaultable assets (the Jump-to-Default CEV or JDCEV model) is derived in \cite{JDCEV}.  One advantage of these two local volatility models is that they admit closed-form pricing formulas for European options, written as infinite series of special functions.  {}{Another advantage of local volatility models is that, for models whose transition density is not available in closed form, accurate density approximations are often available.  See, for example, \cite{pagliarani2011analytical}.}
\par
In this paper, we introduce a new class of local volatility models which, like the CEV and JDCEV models, allow for European option prices to be expressed in closed form as an infinite series.  Additionally, we derive an expression for the \emph{exact} implied volatility surface induced by our class of models.  Previous studies of the implied volatility surface induced by local volatility models focused on heat-kernel expansions to derive \emph{asymptotic approximations} of the volatility smile (see e.g., \cite{gatherallocal,henry2005general} and references therein).  It is worth mentioning that \cite{dupire1994pricing} solves the inverse problem of finding a formula for the local volatility function the produces a given observed implied volatility surface exactly.
\par
{}{Essential for our mathematical presentation, is the use of spectral theory.  The spectral representation theorem has been widely applied in mathematical finance.   An exhaustive review would be prohibitive.  However, we mention the seminal work of \cite{linetsky2003}, who lay the groundwork for option-pricing with eigenfunctions in a scalar diffusion setting.  For applications of eigenfunction methods in a stochastic volatility (i.e., multivariate) setting, we refer the reader to \cite{lorig2,lorigmultiscale}.  While previous spectral-related work has focused exclusively on eigenfunction expansions for \emph{self-adjoint} operators in Hilbert space, here, we focus on generalized eigenfunction expansions for \emph{normal} operators.  To our knowledge, this is the first time the spectral theory of normal operators has been used in a financial setting.}
\par
The rest of this paper proceeds as follows: in section \ref{sec:assumptions} we present our class of models and describe our assumptions about the market.  In section \ref{sec:pricing} we derive a formula for the price of a European option, written in a general form which is valid for any model within our framework.  In section \ref{sec:impvol} we provide an formula for the implied volatility smile induced by our class models.  In section \ref{sec:example}, as an example of our framework, we perform explicit pricing and implied volatility computations for a CEV-like example.  Numerical results are provided at the conclusion of the text.  An appendix with some mathematical background is also provided.  Concluding remarks can be found in section \ref{sec:conclusion}.

%
%

\section{Model and assumptions}
\label{sec:assumptions}
We assume a frictionless market, no arbitrage and take an equivalent martingale measure $\Pb$ chosen by the market on a complete filtered probability space $(\Om,\Fc,\{\Fc_t,t \geq 0\},\Pb)$.  The filtration $\{\Fc_t, t \geq 0 \}$ represents the history of the market.  All processes defined below live on this space.  For simplicity we assume zero interest rates and no dividends so that all assets are martingales.  We consider an asset $X$ whose dynamics are given by
\begin{align}
dX_t
	&=		\( a^2 + \eps \, \eta( \log X_t) \)^{1/2} X_t \, dW_t , \label{eq:dX}
\end{align}
where, $a>0$, $\eps \geq 0$, the function $\eta:\Rb \to \Rb^+$ is an element of $\Sc$ (the Schwartz space of rapidly decreasing functions on $\Rb$; see equation \eqref{eq:schwartz} for a definition) and $W$ is a Brownian motion.  
{}{The restriction $\eta \in \Sc$ is needed to prove Theorem \ref{thm:u.eps}}.
We assume that $X_0>0$, the initial value of $X$ is known.  Note that $X$ has \emph{local} volatility $\sig(X_t) = \( a^2 + \eps \, \eta( \log X_t) \)^{1/2}$.  Obviously, if $\eta=0$ then $X$ is a geometric Brownian motion.  This will be key for the implied volatility analysis in section \ref{sec:impvol}.  Observe that both zero and infinity are natural boundaries according to Feller's boundary classification for one-dimensional diffusions (see \cite{borodin} pp. 14-15).  That is, both zero and infinity are unattainable.
\par
In what follows it will be convenient to introduce $Y = \log X$.  A simple application of It\^o's formula shows that $Y$ satisfies
\begin{align}
dY_t
	&=		-\frac{1}{2} \( a^2 + \eps \, \eta(Y_t) \) dt + \( a^2 + \eps \, \eta(Y_t) \)^{1/2} dW_t . \label{eq:dY}
\end{align}
{}{With $\eta \in \Sc$, the volatility and drift coefficients in \eqref{eq:dY} satisfy the usual growth and Lipschitz conditions, which guarantee a unique strong solution to SDE \eqref{eq:dY}.  See \cite{oksendal} Theorem 5.2.1.}

%
%

\section{Option pricing}
\label{sec:pricing}
We wish to find the time-zero value of a European-style option with payoff $h(Y_t)$ at time $t>0$.  Using risk-neutral pricing we express the initial value of the option $u^\eps(t,y)$ as the risk-neutral expectation of the option payoff 
\begin{align}
u^\eps(t,y) 
	&=		\Eb_y \, h(Y_t) , &
	&{}{(r=0)} \label{eq:u.eps.def}
\end{align}
where the notation $\Eb_y$ indicates $\Pb$-expectation starting from $y = \log X_0$.  
{}{Suppose $h \in C_0^2(\Rb)$ (compactly supported functions with continuous derivatives up to order 2).}
Then, the function $u^\eps(t,y)$ satisfies the Kolmogorov backward equation
\begin{align}
\( - \d_t + \Ac^\eps \) u^\eps
	&=		0 , &
u^\eps(0,y)
	&=		h(y) . \label{eq:u.eps.PIDE}
\end{align}
where $\Ac^\eps$ is the generator of the process $Y$.  The domain of $\Ac^\eps$ is defined as the set of $f$ for which the limit $\lim_{t \to 0}\frac{1}{t}\( \Eb_y f(Y_t) - f(y) \)$ exists in the strong sense.  For any {}{$f \in C_0^2(\Rb)$} the generator $\Ac^\eps$ has the explicit representation
\begin{align}
\Ac^\eps
	&=		\Ac_0 + \eps \, \eta \, \Ac_1 , &
\Ac_0
	&= 		\frac{1}{2} a^2 \( \d^2 - \d \)  , &
\Ac_1
	&=		\frac{1}{2} \( \d^2 - \d \) , 
\label{eq:A0.A1}
\end{align}
where $\d$, without the subscript $t$, indicates differentiation with respect to $y$.
\begin{remark}\rm
\label{rmk:call}
{}{It is possible to extend our results to payoff functions $h$ that are continuous and have linear growth in $\log y$ (e.g. Call options).  However, rigorous justification for this it outside the scope of this paper.  Numerical tests are provided to support this claim.}
\end{remark}
\begin{remark}\rm
The operators $\Ac_0$ and $\Ac_1$ are normal operators in the Hilbert space $\Hc=L^2(\Rb,dy)$ and satisfy the following (improper) eigenvalue equations (neither $\Ac_0$ nor $\Ac_1$ have any proper eigenvalues)
\begin{align}
\Ac_0 \psi_\lam
	&=		\phi_\lam \psi_\lam , &
\psi_\lam
	&=		\frac{1}{\sqrt{2\pi}} e^{i \lam y} , &
\phi_\lam
	&=		\frac{1}{2} a^2 \( -\lam^2 - i \lam \) , \\
\Ac_1 \psi_\lam
	&=		\chi_\lam \psi_\lam , &
\psi_\lam
	&=		\frac{1}{\sqrt{2\pi}} e^{i \lam y} , &
\chi_\lam
	&=		\frac{1}{2} \( -\lam^2 - i \lam \) . \label{eq:eigen.A1}
\end{align}
{}{Note that, as shown by \cite{diracdelta}, for any analytic $f$, we have
\begin{align}
\frac{1}{2\pi} \int_\Rb dx \, e^{-i\lam x} 
	&=	\del(\lam) , &
\int_\Rb d\mu \, \del(\lam-\mu) f(\mu) 
	&=	f(\lam) , &
\lam
	&\in \Cb .
\end{align}
Thus, the generalized eigenfunctions satisfy the orthogonality relation
\begin{align}
( \psi_\lam , \psi_\mu )
	&=	\frac{1}{2\pi} \int_\Rb dx \, e^{-i ( \lam - \mu) x}
	=		\del(\lam - \mu ) , &
\lam,\mu
	&\in \Cb . \label{eq:diracdelta}
\end{align}
See also, \cite{friedman1956principles}, equation (4.35).}
Note also that Borel-measurable functions of normal operators (e.g., $g(\Ac_0)$) are well-defined by the spectral theorem for normal operators, as explained in Appendix \ref{sec:spectral}.
\end{remark}
\par
We seek a solution to Cauchy problem \eqref{eq:u.eps.PIDE} of the form
\begin{align}
u^\eps
	&=		\sum_{n=0}^\infty \eps^n \, u_n . \label{eq:u.eps.expand}
\end{align}
We will justify this expansion in Theorem \ref{thm:u.eps}.  Inserting the expansion \eqref{eq:u.eps.expand} into Cauchy problem \eqref{eq:u.eps.PIDE} and collecting terms of like powers of $\eps$ we obtain
\begin{align}
\Oc(1):&&
( - \d_t + \Ac_0 ) u_0
	&=		0  , &
u_0(0,y)
	&=		h(y) , \\
\Oc(\eps^n):&&
( - \d_t + \Ac_0 ) u_n
	&=		- \eta \Ac_1 u_{n-1} , &
u_n(0,y)
	&=		0  . \label{eq:un.pde}
\end{align}
The solution to the above equations is
\begin{align}
\Oc(1):&&
u_0(t,y)
	&=	e^{t\Ac_0} h(y) , \\
\Oc(\eps^n):&&
u_n(t,y)
	&=		\int_0^t ds \, e^{(t-s)\Ac_0} \eta(y) \Ac_1 u_{n-1} (s,y) .
\end{align}
Using the equation \eqref{eq:operator} from appendix \ref{sec:spectral} to write the spectral representation of $e^{t \Ac_0}$ we obtain
\begin{align}
\Oc(1):&&
u_0(t,y)
	&=		\int_\Rb d\lam \, e^{t \phi_\lam} (\psi_\lam, h) \psi_\lam(y) , \\
\Oc(\eps^n):&&
u_n(t,y)
	&=		\int_0^t \int_\Rb ds \, d\mu \, e^{(t-s)\phi_\mu} \( \psi_\mu , \eta \Ac_1 u_{n-1} (s,\cdot) \) \psi_\mu(y) ,
\end{align}
After a bit of algebra, we find an explicit representation for $u_n(t,y)$ 
\begin{align}
u_n
	&=		\underbrace{ \int \cdots \int}_{n+1} \( \prod_{k=0}^n d\lam_k \)
				\( \sum_{k=0}^n \frac{e^{t \phi_{\lam_k}}}{\prod_{j \neq k}^n (\phi_{\lam_k}-\phi_{\lam_j})}\)
				\( \prod_{k=0}^{n-1} \( \psi_{\lam_{k+1}}, \eta \Ac_1 \psi_{\lam_k} \) \) \(\psi_{\lam_0},h \) \, \psi_{\lam_n} . 
				\label{eq:u.n}
\end{align}
\begin{remark}
{}{As we will show in section \ref{sec:example}, for certain choices of $\eta$, the $(n+1)$-fold integral in \eqref{eq:u.n} will collapse into a single integral.}
\end{remark}
We have now obtained a formal expansion \eqref{eq:u.eps.expand}-\eqref{eq:u.n} for the price of a European option.  The following theorem provides conditions under which the expansion is guaranteed to be valid.  
\begin{theorem}
\label{thm:u.eps}
Suppose $\eps \leq \frac{a^2}{\left\| \eta \right\|}$, where $\left\| \eta \right\|=\sqrt{(\eta,\eta)}$.  Then the option price $u^\eps(t,y)$ is given by \eqref{eq:u.eps.expand}-\eqref{eq:u.n}.
\end{theorem}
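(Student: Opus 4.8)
The plan is to recognize \eqref{eq:u.eps.expand}--\eqref{eq:u.n} as the Dyson--Duhamel expansion of the perturbed semigroup $e^{t\Ac^\eps}$ and to prove that this expansion converges in $\Hc = L^2(\Rb,dy)$ under the stated bound on $\eps$. First I would record two structural facts from \eqref{eq:A0.A1} and the spectral picture in \eqref{eq:eigen.A1}: since $\mathrm{Re}\,\phi_\lam = -\tfrac12 a^2\lam^2 \le 0$, the spectral theorem of Appendix \ref{sec:spectral} gives $\norm{e^{t\Ac_0}} \le 1$, so $\Ac_0$ generates a contraction semigroup on $\Hc$; and $\Ac_1 = a^{-2}\Ac_0$, so each perturbation factor is $\eps\,\eta\,\Ac_1 = (\eps/a^2)\,\eta\,\Ac_0$. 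Iterating Duhamel's formula for \eqref{eq:u.eps.PIDE} then expresses $u_n$ as the time-ordered integral $u_n(t) = a^{-2n} \int_{0<s_1<\cdots<s_n<t} e^{(t-s_n)\Ac_0}\,\eta\,\Ac_0\, e^{(s_n-s_{n-1})\Ac_0}\cdots \eta\,\Ac_0\, e^{s_1\Ac_0}\,h \; ds_1\cdots ds_n$, which is exactly \eqref{eq:u.n} after inserting the spectral resolution of each $e^{\tau\Ac_0}$ and performing the partial-fraction (divided-difference) bookkeeping in the variables $\phi_{\lam_k}$.

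The crux is to bound $\norm{u_n(t)}$ so that $\sum_n \eps^n \norm{u_n(t)}$ converges. The naive estimate fails: $\Ac_1$ is unbounded and $\norm{\Ac_0 e^{\tau\Ac_0}} \sim 1/\tau$, whose product over the simplex is non-integrable. The remedy --- and the place where $\eta \in \Sc$ and the $L^2$ norm $\norm{\eta} = \sqrt{(\eta,\eta)}$ enter --- is to distribute the smoothing of the semigroups symmetrically across the factors of $\eta$ and to estimate the resulting $\eta$-sandwiched blocks in Hilbert--Schmidt norm rather than operator norm. Concretely, in the spectral (Fourier) representation a block such as $e^{\tau\Ac_0/2}\,\eta\,e^{\tau'\Ac_0/2}$ has kernel $\tfrac{1}{\sqrt{2\pi}}\,e^{\tau\phi_\mu/2}\,\tilde\eta(\mu-\lam)\,e^{\tau'\phi_\lam/2}$, where $\tilde\eta$ is the Fourier transform of $\eta$; the Gaussian factors $e^{-a^2\tau\mu^2/4}$ render this Hilbert--Schmidt, and Plancherel ($\norm{\tilde\eta} = \norm{\eta}$) bounds its Hilbert--Schmidt norm by a constant multiple of $\norm{\eta}\,\tau^{-1/4}$, an integrable singularity. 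Carrying out this bookkeeping across all $n$ blocks, and using that $h \in C_0^2(\Rb)$ lets the innermost $\Ac_0$ act as $a^2\,e^{s_1\Ac_0}\Ac_1 h$ with $\Ac_1 h \in \Hc$, yields an estimate of the form $\norm{u_n(t)} \le (\norm{\eta}/a^2)^n\, c_n(t)\,\norm{\Ac_1 h}$ with summable constants $c_n(t)$; the geometric factor $(\norm{\eta}/a^2)^n$ is precisely what forces the hypothesis $\eps \le a^2/\norm{\eta}$.

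I expect the main obstacle to be exactly this norm bookkeeping: balancing the second-order unboundedness of each $\Ac_1$ against the parabolic smoothing of the intervening semigroups so that every time gap carries a strictly sub-integrable singularity while every factor of $\eta$ is absorbed in $L^2$. Once the series is shown to converge in $\Hc$, uniformly for $t$ in compacts, the final step is routine: term-by-term differentiation is justified by the same estimates, so $v := \sum_n \eps^n u_n$ satisfies \eqref{eq:u.eps.PIDE}; since $v = e^{t\Ac^\eps}h$ by construction and the backward Cauchy problem \eqref{eq:u.eps.PIDE} is well posed for the generator $\Ac^\eps$ of the unique strong solution $Y$ of \eqref{eq:dY}, uniqueness of solutions identifies $v$ with the risk-neutral price $u^\eps$ of \eqref{eq:u.eps.def}, completing the proof.
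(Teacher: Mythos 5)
Your strategy is genuinely different from the paper's, and unfortunately it has a concrete gap at exactly the point you flag as the crux. The paper never estimates the series term by term: it invokes Chernoff's perturbation theorem (a dissipative perturbation with densely defined adjoint, relatively bounded with respect to $\Ac_0$ with relative bound $\leq 1$, still yields a generator of a $C_0$ contraction semigroup), verifies dissipativity of $\eps\,\eta\,\Ac_1$ via the positive maximum principle and the L\'evy-type characterization, obtains the relative bound from $\Ac_1 = a^{-2}\Ac_0$ together with $\eps \leq a^2/\norm{\eta}$, and only then justifies \eqref{eq:u.eps.expand} by analyticity of $\eps \mapsto e^{t\Ac^\eps}h$. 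Your plan is instead to sum the Dyson--Duhamel series directly. The problem is that your key Hilbert--Schmidt estimate controls the wrong block. The perturbation is $\eps\,\eta\,\Ac_1$, not $\eps\,\eta$: every one of the $n$ factors in the $n$-th Duhamel term carries a full second-order operator $\Ac_1 = a^{-2}\Ac_0$. Only the innermost one can be discharged onto the payoff via $\Ac_1 e^{s_1\Ac_0}h = e^{s_1\Ac_0}\Ac_1 h$; for the other $n-1$ factors this commutation is blocked by the multiplication operators $\eta$, which do not commute with $\Ac_0$. The blocks you must actually estimate are therefore of the form $e^{\tau\Ac_0/2}\,\eta\,\Ac_0\,e^{\tau'\Ac_0/2}$, whose Fourier kernel carries the extra factor $\phi_\lam$, and the same Plancherel computation then gives
\begin{align}
\norm{ e^{\tau\Ac_0/2}\,\eta\,\Ac_0\,e^{\tau'\Ac_0/2} }_{HS}^2
	&=		\frac{1}{2\pi}\int_\Rb\int_\Rb e^{\tau\,\text{Re}\,\phi_\mu}\,
				|\tilde\eta(\mu-\lam)|^2\,|\phi_\lam|^2\, e^{\tau'\,\text{Re}\,\phi_\lam}\, d\mu \, d\lam
	\leq	\frac{\norm{\eta}^2}{2\pi}\int_\Rb |\phi_\lam|^2\, e^{-\tau' a^2 \lam^2/2}\, d\lam
	\leq	\frac{C\,\norm{\eta}^2}{a\,(\tau')^{5/2}} ,
\end{align}
i.e.\ a singularity of order $(\tau')^{-5/4}$ rather than your $(\tau')^{-1/4}$: the $\lam^2$ growth of $\phi_\lam$ must be paid for by the Gaussian, costing a full extra $1/\tau'$. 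This is not integrable on the time simplex, so the bookkeeping does not produce finite --- let alone summable --- constants $c_n(t)$, and the claimed bound $\norm{u_n(t)} \leq (\norm{\eta}/a^2)^n c_n(t)\norm{\Ac_1 h}$ is not established.

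Nor is this a repairable inefficiency in where you placed the smoothing. The perturbation has the same differential order as the generator (this is precisely the borderline ``relative bound $1$'' case), a gap of length $\tau$ can absorb one derivative at the integrable cost $\tau^{-1/2}$ but two derivatives only at the non-integrable cost $\tau^{-1}$, and after discharging two derivatives onto $h \in C_0^2(\Rb)$ you still have $2(n-1)$ derivatives to distribute among $n+1$ gaps (commuting derivatives past the $\eta$'s only produces harmless lower-order terms, never removes the top-order part). By pigeonhole, for $n \geq 4$ some gap must absorb two derivatives, so every allocation of the smoothing leaves a non-integrable singularity; a scaling count in the region where all gaps have size $r$ confirms the resulting simplex integrals diverge for large $n$. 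This is exactly why the paper resorts to Chernoff's theorem, where dissipativity substitutes for termwise summability of the Dyson series. Your first step (the Duhamel iteration and partial-fraction identity reproducing \eqref{eq:u.n}) and your last step (identification of the sum with $u^\eps$ by well-posedness) are fine; if you want a self-contained convergence proof you would need to exploit cancellation inside the time integrals --- for instance the divided-difference structure of \eqref{eq:u.n}, which retains both the $1/n!$ factor and the Gaussian decay in $\lam$ --- rather than operator-norm or Hilbert--Schmidt bounds on the individual factors.
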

\begin{proof}
See Appendix \ref{sec:proof3}.
\qquad \end{proof}

%
%

\section{Implied volatility}
\label{sec:impvol}
In this section we fix $(t,y)$ and a call option payoff $h(y)=(e^y-e^k)^+$.  Note that
\begin{align}
(\psi_\lam, h)
	&=		\frac{-e^{k-i k \lam}}{\sqrt{2 \pi } \(i \lam + \lam^2 \)} , &
\text{Im}(\lam)
	&< 		- 1 .
\end{align}
The following definitions will be useful:
\begin{definition}\rm
\label{def:bs}
The \emph{Black-Scholes Price} $u^{BS}:\Rb^+ \to \Rb^+$ is defined as
\begin{align}
u^{BS}(\sig)
	&:=		\int_\Rb d\lam \, e^{t \phi^{BS}_\lam(\sig)} (\psi_\lam, h) \psi_\lam , &
\phi^{BS}_\lam(\sig)
	&=		\frac{1}{2}\sig^2(-\lam^2 - i \lam) . \label{eq:u.BS}
\end{align}
\end{definition}
\begin{remark}
\label{rmk:bs}
\rm
{}{Usually, the Black-Scholes price is written
\begin{align}
u^{BS}(\sig)
	&=	\int_\Rb dx \, h(x) \, \Phi_{m,s^2}(x) , &
m
	&=	y-\frac{1}{2}\sig^2 t , &
s^2
	&=	\sig^2 t , \label{eq:BS.price}
\end{align}
where $\Phi_{m,s^2}$ is a Gaussian density with mean $m$ and variance $s^2$.  Equation \eqref{eq:u.BS} is simply the Fourier representation of \eqref{eq:BS.price}.}
\end{remark}
\begin{definition}
\label{def:imp.vol}
\rm
{}{For each fixed $\log$ spot $y$, time to maturity $t$, and $\log$ strike price $k$}, the \emph{Implied Volatility} is defined implicitly as the unique number $\sig^\eps \in \Rb^+$ such that
\begin{align}
u^{BS}(\sig^\eps)
	&=		u^\eps , \label{eq:imp.vol.def}
\end{align}
where $u^\eps$ is as given in Theorem \ref{thm:u.eps}.
\end{definition}
\begin{remark}\rm
Note that $u_0 =u^{BS}(a)$.  As shown in \cite{lorig2012impvol}, when $u^\eps$ can be expanded as an analytic series whose first term corresponds to $u^{BS}$, one can obtain the exact implied volatility corresponding to $u^\eps$.
\end{remark}
\begin{remark}
\label{rmk:exist}
\rm
For $0 < t < \infty$ the existence and uniqueness of the implied volatility $\sig^\eps$ can be deduced by using the general arbitrage bounds for call prices and the monotonicity of $u^{BS}$.
{}{See \cite{fpss}, Section 2.1, Remark (i).}
\end{remark}
\begin{remark}
\label{rmk:analytic}
{}{Observe that, for any $\sig_0>0$ and $\sig_0 + \del>0$, the function $u^{BS}(\sig_0 + \del)$ is given by its Taylor series:
\begin{align}
u^{BS}(\sig_0 + \del)
	&=	\sum_{n=0}^\infty \frac{\del^n}{n!} \d_\sig^n u^{BS}(\sig_0), &
\d_\sig^n u^{BS}(\sig_0)
	&=		\int_\Rb d\lam \, \( \d_\sig^n e^{t \phi^{BS}_\lam(\sig_0)} \) ( \psi_\lam, h) \psi_\lam .
\end{align}
Observe also that, by monotonicity of $u^{BS}$ we have $\d_\sig u^{BS}(\sig)> 0$ for all $\sig>0$.}  Therefore, $u^{BS}$ is an invertible analytic function.  By the Lagrange inversion theorem, the inverse function $[u^{BS}]^{-1}$ is also analytic.
\end{remark}
\par
Clearly, $u^\eps$ is an analytic function of $\eps$ (we derived its power series expansion).  It is a useful fact that the composition of two analytic functions is also analytic (see \cite{brown1996complex}, section 24, p. 74).  Thus, in light of Remark \ref{rmk:analytic}, we deduce that $\sig^\eps = [u^{BS}]^{-1}(u^\eps)$ is an analytic function and therefore has a power series expansion in $\eps$.  We write this expansion as follows
\begin{align}
\sig^\eps
	&=		\sig_0 + \del^\eps , &
\del^\eps
	&=		\sum_{k=1}^\infty \eps^k \sig_k  . \label{eq:sigma.expand}
\end{align}
Taylor expanding $u^{BS}$ about the point $\sig_0$ we have
\begin{align}
u^{BS}(\sig^\eps)
	&=	u^{BS}(\sig_0 + \del^\eps) \\
	&=		\sum_{n=0}^\infty \frac{1}{n!}(\del^\eps \d_\sig )^n u^{BS}(\sig_0) \\
	&=		u^{BS}(\sig_0) +
						\sum_{n=1}^\infty \frac{1}{n!} \( \sum_{k=1}^\infty \eps^k \sig_k \)^n \d_\sig^n u^{BS}(\sig_0) \\
	&=		u^{BS}(\sig_0) + 
						\sum_{n=1}^\infty \frac{1}{n!}  
						\[ \sum_{k=1}^\infty \( \sum_{j_1+\cdots+j_n=k} \prod_{i=1}^n \sig_{j_i} \) \eps^k \] \d_\sig^n u^{BS}(\sig_0) \\
	&=		u^{BS}(\sig_0) +
						\sum_{k=1}^\infty \eps^k 
						\[ \sum_{n=1}^\infty \frac{1}{n!} \( \sum_{j_1+\cdots+j_n=k} \prod_{i=1}^n \sig_{j_i} \) \d_\sig^n \] u^{BS}(\sig_0) \\
	&=	u^{BS}(\sig_0) +
						\sum_{k=1}^\infty \eps^k 
						\[ \sig_k \d_\sig + \sum_{n=2}^\infty \frac{1}{n!}\( \sum_{j_1+\cdots+j_n=k} \prod_{i=1}^n \sig_{j_i} \)  \d_\sig^n \]
						u^{BS}(\sig_0) .			\label{eq:u.bs.expand}
\end{align}
Now, we insert expansions \eqref{eq:u.eps.expand} and \eqref{eq:u.bs.expand} into \eqref{eq:imp.vol.def} and collect terms of like order in $\eps$
\begin{align}
\Oc(1):&&
u_0
	&= 		u^{BS}(\sig_0) , \\
\Oc(\eps^k):&&
u_k
	&=		\sig_k \d_\sig u^{BS}(\sig_0)
				+ \sum_{n=2}^\infty \frac{1}{n!}\( \sum_{j_1+\cdots+j_n=k} \prod_{i=1}^n \sig_{j_i} \)  \d_\sig^n	u^{BS}(\sig_0) , &
k
	&\geq	1 .
\end{align}
Solving the above equations for $\{\sig_k\}_{k=0}^\infty$ we find
\begin{align}
\Oc(1):&&
\sig_0
	&=	a , \\
\Oc(\eps^k):&&
\sig_k
	&=	\frac{1}{\d_\sig u^{BS}(\sig_0)}
			\( u_k - 
			\sum_{n=2}^\infty \frac{1}{n!}\( \sum_{j_1+\cdots+j_n=k} \prod_{i=1}^n \sig_{j_i} \)  \d_\sig^n	u^{BS}(\sig_0)
			\) , &
k
	&\geq 1 . \label{eq:sig.k}
\end{align}
\begin{remark}\rm
\label{rmk:easy}
The right hand side of \eqref{eq:sig.k} involves only $\sig_j$ for $j \leq k-1$.  Thus, the $\{\sig_k\}_{k=1}^\infty$ can be found recursively.
\end{remark}
Explicitly, up to $\Oc(\eps^4)$ we have
\begin{align}
\Oc(\eps):&&
\sig_1
	&= 	\frac{u_1}{\d_\sig u_0}, \\
\Oc(\eps^2):&&
\sig_2
	&= 	\frac{u_2 - \tfrac{1}{2} \sig_1^2 \d_\sig^2 u_0}{\d_\sig u_0}, \\
\Oc(\eps^3):&&
\sig_3
	&= 	\frac{u_3 - (\sig_2 \sig_1 \d_\sig^2 + \tfrac{1}{6}\sig_1^3 \d_\sig^3) u_0}{\d_\sig u_0}, \\
\Oc(\eps^4):&&
\sig_4
	&=  \frac{u_4 - (\sig_3 \sig_1 \d_\sig^2 + \tfrac{1}{2} \sig_2^2 \d_\sig^2 
									+ \tfrac{1}{2} \sig_2 \sig_1^2 \d_\sig^3 + \tfrac{1}{24} \sig_1^4 \d_\sig^4) u_0}{\d_\sig u_0}.
\end{align}
We summarize our implied volatility result in the following theorem:
\begin{theorem}
\label{thm:imp.vol}
The implied volatility $\sig^\eps$ defined in \eqref{eq:imp.vol.def} is given explicitly by \eqref{eq:sigma.expand} where $\sig_0=a$ and $\{\sig_k\}_{k=1}^\infty$ are given by \eqref{eq:sig.k}.
\end{theorem}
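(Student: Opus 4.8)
The plan is to view Theorem \ref{thm:imp.vol} as the rigorous justification of the formal coefficient-matching already displayed in \eqref{eq:sigma.expand}--\eqref{eq:sig.k}. The substance of the statement is not the recursion itself, which is pure bookkeeping, but rather the assertion that $\sig^\eps$ genuinely admits a convergent power series in $\eps$; once that is secured, equating coefficients on both sides of \eqref{eq:imp.vol.def} delivers $\sig_0 = a$ and \eqref{eq:sig.k} immediately. So I would organize the argument around (i) existence and uniqueness of $\sig^\eps$, (ii) its analyticity in $\eps$, and (iii) the term-by-term matching.

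First I would dispose of existence and uniqueness of $\sig^\eps$ for each fixed $(t,y,k)$, which is exactly Remark \ref{rmk:exist}: the standard no-arbitrage bounds on call prices together with the strict monotonicity $\d_\sig u^{BS}(\sig)>0$ ensure that \eqref{eq:imp.vol.def} has a unique positive solution, and in particular that $u^{BS}$ is injective on $\Rb^+$. The crucial step is then to argue that $\sig^\eps = [u^{BS}]^{-1}(u^\eps)$ is an analytic function of $\eps$ near the origin. By Theorem \ref{thm:u.eps}, $u^\eps$ is given by the convergent series \eqref{eq:u.eps.expand}, hence is analytic in $\eps$; by Remark \ref{rmk:analytic}, $u^{BS}$ is analytic with nonvanishing derivative, so the Lagrange inversion theorem makes $[u^{BS}]^{-1}$ analytic on a neighborhood of $u_0 = u^{BS}(a)$. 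Since a composition of analytic functions is analytic, $\sig^\eps$ is analytic in $\eps$ and therefore possesses the convergent expansion \eqref{eq:sigma.expand}.

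With the expansion in hand, I would substitute \eqref{eq:sigma.expand} into $u^{BS}(\sig^\eps)$ and Taylor expand about $\sig_0$, carrying out precisely the rearrangement shown in \eqref{eq:u.bs.expand}. The only manipulation requiring care is the interchange of the summations over $n$ and $k$, which converts the multinomial (Fa\`a di Bruno-type) double sum into a single power series in $\eps$; this interchange is legitimate because the series converges absolutely on a neighborhood of $\eps=0$, a direct consequence of the analyticity just established. Matching \eqref{eq:u.eps.expand} against the reorganized series order by order, the $\Oc(1)$ term gives $u_0 = u^{BS}(\sig_0)$, and since $u_0 = u^{BS}(a)$ with $u^{BS}$ injective we conclude $\sig_0 = a$. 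The $\Oc(\eps^k)$ term, after dividing by $\d_\sig u^{BS}(\sig_0) > 0$, is exactly \eqref{eq:sig.k}; as noted in Remark \ref{rmk:easy}, its right-hand side involves only $\sig_j$ with $j \leq k-1$, so the coefficients are determined recursively.

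I expect the main obstacle to be the analyticity and convergence bookkeeping rather than any fresh computation. Specifically, the delicate points are confirming that the formal rearrangement in \eqref{eq:u.bs.expand} is underwritten by absolute convergence, and checking that the neighborhood of $\eps=0$ on which $u^\eps$ converges (supplied by Theorem \ref{thm:u.eps} under the hypothesis $\eps \leq a^2/\norm{\eta}$) overlaps the neighborhood of $u_0$ on which $[u^{BS}]^{-1}$ is analytic, so that the composition is legitimately analytic on a common disk. Everything downstream of that is the coefficient matching already carried out in the text.
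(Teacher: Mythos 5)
Your proposal is correct and follows essentially the same route as the paper: establish analyticity of $\sig^\eps = [u^{BS}]^{-1}(u^\eps)$ via Theorem \ref{thm:u.eps}, Remark \ref{rmk:analytic} (Lagrange inversion), and the composition-of-analytic-functions fact, then Taylor expand $u^{BS}$ about $\sig_0$ and match coefficients order by order in $\eps$ to obtain $\sig_0=a$ and the recursion \eqref{eq:sig.k}. Your explicit attention to the absolute-convergence justification of the rearrangement and to the overlap of the convergence domains is a point of extra care that the paper passes over silently, but it does not change the argument.
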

\begin{remark}\rm
Everything we have done so far is \underline{exact}.  
The accuracy of the implied volatility expansion \eqref{eq:sigma.expand} is limited only by the number of terms one wishes to compute.
\end{remark}

%
%

\section{CEV-like example}
\label{sec:example}
In the constant elasticity of variance (CEV) model of \cite{CoxCEV} the dynamics of $X$ are assumed to be of the form $dX_t = \sqrt{\eps}\, X_t^{\beta/2} X_t dW_t$.  A key feature of the CEV model is that, when $\beta < 0$, the local volatility function $\sig(x) = \sqrt{\eps}\, x^{\beta/2}$ \emph{increases} as $x \searrow 0$, which (i) is consistent with the leverage effect and (ii) results in a negative implied volatility skew.  However, values of $\beta<0$ also cause the volatility to drop unrealistically close to zero as $x$ increases.  If we choose $\eta(y)=e_\beta(y):=e^{\beta y}$ then from \eqref{eq:dX} the dynamics of $X$ become
\begin{align}
dX_t
	&=		( a^2 + \eps X_t^\beta )^{1/2} X_t dW_t , \label{eq:dX.2}
\end{align}
Note that the local volatility function $\sig(x)=(a^2 + \eps \, x^{\beta})^{1/2}$ behaves asymptotically like $\sig(x) \sim \sqrt{\eps}\, x^{\beta/2}$ as $x \searrow 0$ and behaves asymptotically like a constant $\sig(x) \sim a$ as $x \nearrow \infty$.
\begin{remark}\rm
\label{rmk:domain}
Because $e^{\beta y}$ is unbounded as $y \to -\infty$ (recall $\beta<0$), the function $e_\beta \notin \Sc$.  However, we can modify the domain of $u^\eps(t,x)$ to be $\Rb^+ \times \Rb_0$ where $\Rb_0:=(y_0,\infty)$ {}{and $y_0 \in \Rb$ is arbitrary}.  The operators $\Ac_0$ and $\Ac_1$ would then be defined on $L^2(\Rb_0,dy)$ and the domain of these operators would include an absorbing boundary condition at $y_0$ (signifying default of $X$ the first time $X$ reaches the level $e^{y_0}$).  Note that $\left\| e_\beta \right\|_0 := ( \int_{y_0}^\infty | e_\beta |^2 dy )^{1/2}= e^{\beta y_0}/\sqrt{-2 \beta}$.  In the analysis that follows, it will simplify computations considerably if we continue to work on $L^2(\Rb,dy)$ as working on $L^2(\Rb_0,dy)$ would require modifying the eigenfunctions $\psi_\lam$ from complex exponentials $\exp(i\lam y)$ to sines $\sin (\lam y)$.  However, the simplification comes at a cost; in light of the conditions of theorem \eqref{thm:u.eps} our results may not be valid for values of {}{$y<\frac{1}{\beta} \log \frac{a^2 \sqrt{-2\beta}}{\eps}$}.
\end{remark}
\par
We wish to find a simplified expression for $u_n$ \eqref{eq:u.n} for the case $\eta=e_\beta$.  Using \eqref{eq:eigen.A1} and {}{\eqref{eq:diracdelta} we note that
\begin{align}
(\psi_\mu , e_\beta \Ac_1 \psi_\lam )
	&=	\chi_\lam \, \frac{1}{2 \pi} \int dx \, e^{i(\lam - \mu - i \beta)x} 
	=		\chi_\lam \, \del(\lam-\mu-i\beta) . \label{eq:delta}
\end{align}
}
Thus, the $(n+1)$-fold integral \eqref{eq:u.n} collapses into a single integral
\footnote{For a Dirac delta function with a complex argument $\zeta$ we have the following identity from \cite{diracdelta}: $\int_\Rb f(\lam) \del(\lam-\zeta) d\lam = f(\zeta)$.}
\begin{align}
u_n
	&=		\int_\Rb d\lam \( \sum_{k=0}^n \frac{e^{t \phi_{\lam-ik\beta}}}
				{\prod_{j\neq k}^n (\phi_{\lam-ik\beta}-\phi_{\lam-ij\beta})}\)
				\( \prod_{k=0}^{n-1} \chi_{\lam-ik\beta}\) (\psi_\lam, h) \, \psi_{\lam-in\beta} \\
	&=		e_{n\beta} \int_\Rb d\lam \( \sum_{k=0}^n \frac{e^{t \phi_{\lam-ik\beta}}}
				{\prod_{j\neq k}^n (\phi_{\lam-ik\beta}-\phi_{\lam-ij\beta})}\)
				\( \prod_{k=0}^{n-1} \chi_{\lam-ik\beta}\) (\psi_\lam, h) \, \psi_{\lam} .
				\label{eq:u.n.2}
\end{align}
\begin{remark}\rm
\label{rmk:truncation}
Although we have written the option price as an infinite series \eqref{eq:u.eps.expand}, from a practical standpoint, one is only able to compute $u^\eps \approx {}{ u^{(N)} :=} \sum_{n=0}^N \eps^n u_n$ for some finite $N$.  For any finite $N$ we may pass the sum $\sum_{n=0}^N$ through the integral appearing in \eqref{eq:u.n.2}.  Thus, for the purposes of computation, the most convenient way express the approximate option price is
\begin{align}
u^\eps
	&\approx	{}{u^{(N)}
	=}
				\int_\Rb d\lam \, (\psi_\lam, h) \, \psi_{\lam} \sum_{n=0}^N 
				\eps^n \, e_{n\beta} \( \sum_{k=0}^n \frac{e^{t \phi_{\lam-ik\beta}}}
				{\prod_{j\neq k}^n (\phi_{\lam-ik\beta}-\phi_{\lam-ij\beta})}\)
				\( \prod_{k=0}^{n-1} \chi_{\lam-ik\beta}\) . \label{eq:uN.approx}
\end{align}
Note, to obtain the approximate value of $u^\eps$, \emph{only a single integration is required}.  This makes our pricing formula as efficient as other models in which option prices are expressed as a Fourier-type integral (e.g. exponential L\'evy processes, Heston model, etc.).
\end{remark}
\subsection*{Numerical Results}
{}{In light of Remarks \ref{rmk:call}, \ref{rmk:domain} and \ref{rmk:truncation}, we provide some numerical tests supporting the use of the model considered in section \ref{sec:example}.}
\subsubsection*{Monte Carlo Test}
{}{To text the accuracy of approximation \eqref{eq:uN.approx}}, we compute the price of a series of European call options using approximation \eqref{eq:uN.approx} with $N=10$.  We then compute the price of the same series of call options by means of a Monte Carlo simulation using a standard Euler scheme with a time step of $10^{-3}$ years and $10^7$ sample paths.  The largest relative error obtained in the Monte Carlo simulations (i.e., standard error divided by price) was $0.0012$.  Finally, we convert call prices to implied volatilities by inverting Black-Scholes numerically.  The results of this procedure are plotted in figure \ref{fig:MonteCarlo}.  The implied volatilities resulting from the two methods of computation are indistinguishable.
\subsubsection*{Convergence of Transition Density}
Define the transition density $p^\eps(t,y;y_0)$ and the $\Oc(\eps^n)$ approximation of the transition density $p^{(n)}(t,y;y_0)$, which are obtained by setting the payoff function $h$ equal to a Dirac delta function $h = \del_y$.  Explicitly
\begin{align}
p^\eps(t,y;y_0)
	&=		\Eb_{y_0} \, \del_y(Y_t) , &
p^{(n)}(t,y;y_0)
	&=		\sum_{k=0}^n \eps^k p_k(t,y;y_0) .
\end{align}
{}{In order to test the rate of convergence of $p^{(n)}$ to $p^\eps$},
in figure \ref{fig:density}, we plot the approximate transition density $p^{(n)}$ for different values of $n$.  For $n \geq 6$ we see virtually no difference between $p^{(n)}$ and $p^{(n-1)}$.
\subsubsection*{Convergence of Implied Volatility}
Finally, {}{to see how well the implied volatility expansion of section \ref{sec:impvol} performs}, we define the $\Oc(\eps^n)$ approximation of the implied volatility
\begin{align}
\sig^{(n)}
	&:=		\sum_{k=0}^n \eps^k \sig_k , \label{eq:sigma.approximate}
\end{align}
where the $\sig_k$ are given by \eqref{eq:sig.k}.  In figure \ref{fig:impvol} we provide a numerical example illustrating convergence of $\sig^{(n)}$ to $\sig^\eps$.  {}{We compute $\sig^\eps$ using a two-step procedure.  First, we approximate $u^\eps$ using \eqref{eq:uN.approx} with $N=10$.  In light of the Monte Carlo simulation above, this should introduce almost no error.  Then, to find $\sig^\eps$, we solve $u^{BS}(\sig^\eps)=u^\eps$ numerically.}  Implied volatility is plotted as a function of the $\log$-moneyness to maturity ratio, $\text{LMMR}:=(k-y)/t$.  Convergence is fastest for values of $k$ near $y$ and slows as $k$ moves away from $y$ in the negative direction.

%
%

\section{Conclusion}
\label{sec:conclusion}
In this paper we introduce a class of local stochastic volatility models.  Within our modeling framework, we obtain a formula (written as an infinite series) for the price of any European option.  Additionally, we obtain an explicit expression for the implied volatility smile induced by our class of models.  As an example of our framework, we introduce a CEV-like model, which corrects one possible short-coming of the CEV model; namely, our choice of local volatility function does not drop to zero as the value of the underlying increases.  Finally, in the CEV-like example, we show that option prices can be computed with the same level of efficiency as other models in which option prices are computed as Fourier-type integrals.

\subsection*{Thanks}
The author would like to thank Bjorn Birnir and two anonymous reviewers for their helpful comments.

%
%

\clearpage
\bibliographystyle{chicago}
\bibliography{BibTeX-Master}	

%
%

\clearpage
\appendix


\section{Spectral theory of normal operators in a Hilbert space}
\label{sec:spectral}
In this appendix we briefly summarize the theory of normal operators acting on a Hilbert space.  A detailed exposition on this topic (including proofs) can be found in \cite{reedsimon} and \cite{rudin1973functional}.
\par
Let $\Hc$ be a Hilbert space with inner product $(\cdot,\cdot)$.  
The \emph{adjoint} of an operator $\Ac$ acting in $\Hc$ is an operator $\Ac^{*}$ such that $(\Ac f,g)	=	(f, \Ac^{*} g)$.  Here, for simplicity, we have assumed $\text{dom}(\Ac)=\text{dom}(\Ac^{*})=\Hc$.
An operator $(\text{dom}(\Ac),\Ac)$ is said to be \emph{normal} in $\Hc$ if it is closed, densely defined and commutes with its adjoint: $\Ac^* \Ac = \Ac \Ac^*$.
\par
Suppose $\Ac$ is a normal operator acting on the Hilbert space $\Hc=L^2(\Rb,dy)$.  For any Borel measurable function $g$, the operator $g(\Ac)$ can be constructed as follows.  First, one solves the \emph{proper} and \emph{improper}
\footnote{The term ``improper'' is used because the improper eigenvalues $\lam \notin \sig_d(\Ac)$ and the improper eigenfunctions $\psi_\lam \notin \Hc$ since $\(\psi_\lam,\psi_\lam\)=\infty$.}
eigenvalue problems
\begin{align}
\text{proper:}&&
\Ac \, \psi_n
		&= 		\phi_n \, \psi_n , &
\phi_n 
		&\in 	\sig_d(\Ac) , &
\psi_n
		&\in	\Hc , \label{eq:proper} \\
\text{improper:}&&
\Ac \, \psi_\lam
		&= 		\phi_\lam \, \psi_\lam , &
\phi_\lam 
		&\in 	\sig_c(\Ac) , &
\psi_\lam
		&\notin	\Hc , \label{eq:improper}
\end{align}
where $\sig_d(\Ac)$ and $\sig_c(\Ac)$ denote the discrete and continuous spectrum of $\Ac$, respectively.  For the improper eigenvalue problem one extends the domain of $\Ac$ to include all functions $\psi$ for which $\Ac \psi$ makes sense and for which the following boundedness conditions are satisfied
\begin{align}
\lim_{y \to \pm \infty} |\psi(y)|^2 
		&< \infty . \label{eq:bound}
\end{align}
After normalizing, the proper and improper eigenfunctions of $\Ac$ satisfy the following orthogonality relations
\begin{align}
\( \psi_n, \psi_m \)
		&=	\del_{n,m} , & 
\( \psi_\lam, \psi_{\lam'} \)
		&=	\del(\lam - \lam') , &
\( \psi_n, \psi_\lam \) 
		&=	0 .
\end{align}
The operator $g(\Ac)$ is then defined as follows (see \cite{hanson2002operator}, section 5.3.2)
\begin{align}
g(\Ac) f
		&=	\sum_{\phi_n \in \sig_d(\Ac)}	g(\phi_n) \( \psi_n, f \) \psi_n +
				\int_{\sig_c(\Ac)} g(\phi_\lam) \( \psi_\lam, f \) \psi_\lam d\lam . \label{eq:operator}
\end{align}

%
%

\section{Proof of Theorem \ref{thm:u.eps}}
\label{sec:proof3}
Our strategy is to show that $\Ac^\eps=\Ac_0 + \eps \, \eta \, \Ac_1$ generates a semigroup $\Pc_t^\eps = \exp(t \, \Ac^\eps)$.  This will guarantee that $u^\eps(t,y) = \Pc_t^\eps h(y)$ is an analytic function of $\eps$, which in turn, justifies the use of expansion \eqref{eq:u.eps.expand}.  Throughout this section we will work on the Hilbert space $\Hc=L^2(\Rb,dy)$.  
We let $\dom{\Ac_i}=\Sc$, the Schwartz space of rapidly decreasing functions on $\Rb$:
\begin{align}
\Sc
	&=		\{ f \in C^\infty(\Rb): \norm{f}_{\alpha,\beta}< \infty, \forall \, \alpha, \beta \} , &
\norm{f}_{\alpha,\beta}
	&=		\sup_{y \in \Rb} |y^\alpha \d^\beta f(y)| . 				\label{eq:schwartz}
\end{align}
We note that $\Sc$ is a dense subset of $\Hc$.  
{}{Thus, $\Ac_i$ has a unique extension $\overline{\Ac}_i$ with domain $\text{dom}(\overline{\Ac}_i)=\Hc$.}
Our analysis begins with a Theorem from \cite{chernoff}:
\begin{theorem}
Let $\Ac$ be the generator of a $C_0$ contraction semigroup $\Pc_t^0=\exp(t\, \Ac)$ on a Banach space.  Let $\eps \, \Bc$ be a dissipative operator with a densely defined adjoint.  Assume that the inequality
\begin{align}
\norm{\eps \, \Bc u}
	&\leq		c \norm{u} + b \norm{ \Ac u } , &
\forall \, u
	&\in		{\rm dom}(\Ac) ,
\end{align}
holds for some $c \geq 0$ and $b \leq 1$ (i.e., the operator $\eps \, \Bc$ is bounded {}{relative} to $\Ac$ with {}{relative} bound $b \leq 1$).  Then the closure of $\Ac^\eps:=\Ac+\eps\,\Bc$ generates a $C_0$ contraction semigroup $\Pc_t^\eps=\exp(t \, \Ac^\eps)$.
\end{theorem}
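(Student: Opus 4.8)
The plan is to apply the Lumer--Phillips theorem to $\Ac^\eps := \Ac + \eps\,\Bc$ and conclude that its closure generates a $C_0$ contraction semigroup $\Pc_t^\eps$. Recall that a densely defined dissipative operator $T$ on $\Hc$ for which the range condition $\mathrm{Range}(\lam_0 - T) = \Hc$ holds for some $\lam_0 > 0$ has a closure that generates a contraction semigroup. Since $\dom{\Ac} = \Sc$ is dense, it therefore suffices to establish two things for $\Ac^\eps$: that it is dissipative, and that $\lam_0 - \Ac^\eps$ has full range for some $\lam_0 > 0$.

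First I would dispose of dissipativity. As the generator of a contraction semigroup, $\Ac$ is dissipative, and $\eps\Bc$ is dissipative by hypothesis. In the Hilbert space $\Hc = L^2(\Rb, dy)$ relevant to our application, dissipativity of $T$ means $\mathrm{Re}\,(Tu, u) \le 0$, so that $\mathrm{Re}\,(\Ac^\eps u, u) = \mathrm{Re}\,(\Ac u, u) + \eps\,\mathrm{Re}\,(\Bc u, u) \le 0$ and dissipativity of the sum is immediate.

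The substance is the range condition. Writing $R_\lam = (\lam - \Ac)^{-1}$, I would factor $\lam - \Ac^\eps = (I - \eps\Bc R_\lam)(\lam - \Ac)$, so that surjectivity of $\lam - \Ac^\eps$ reduces to invertibility of $I - \eps\Bc R_\lam$. The crucial a priori bound comes from dissipativity of $\Ac$: for $u = R_\lam v$ one has $\norm{v}^2 = \norm{(\lam - \Ac)u}^2 = \lam^2\norm{u}^2 - 2\lam\,\mathrm{Re}\,(\Ac u, u) + \norm{\Ac u}^2 \ge \lam^2\norm{u}^2 + \norm{\Ac u}^2$, which yields $\norm{\Ac R_\lam} \le 1$ and $\norm{R_\lam} \le 1/\lam$. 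Inserting this into the relative bound gives $\norm{\eps\Bc R_\lam} \le c/\lam + b$. When $b < 1$, choosing $\lam$ large makes this quantity strictly less than one, so $I - \eps\Bc R_\lam$ is invertible by its Neumann series and the range condition follows; Lumer--Phillips then closes the case $b < 1$.

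The main obstacle is the endpoint $b = 1$, where the bound degrades to $\norm{\eps\Bc R_\lam} \le c/\lam + 1$ and the Neumann series diverges. Here the densely defined adjoint hypothesis becomes essential, and I would reduce the range condition to dissipativity of the adjoint: if $\lam - \overline{\Ac^\eps}$ failed to have dense range, some $v \neq 0$ would satisfy $(\overline{\Ac^\eps})^* v = \lam v$, whereupon dissipativity of $(\overline{\Ac^\eps})^*$ would force $\lam\norm{v}^2 = \mathrm{Re}\,((\overline{\Ac^\eps})^* v, v) \le 0$, a contradiction. Since $\Ac^*$ generates the adjoint contraction semigroup and the densely defined $\Bc^*$ inherits dissipativity, $\Ac^* + \eps\Bc^*$ is formally dissipative; the genuine difficulty is the domain question of showing that the offending $v$ actually lies in $\dom{\Ac^*} \cap \dom{\Bc^*}$, which I would resolve by the method of continuity --- generating $\Ac + t\eps\Bc$ for each $t \in [0,1)$ via the previous step (its relative bound being $t < 1$) and passing to the limit $t \uparrow 1$ under uniform a priori estimates. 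Making this endpoint passage rigorous is where the real work lies.
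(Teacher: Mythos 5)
The paper does not prove this theorem at all: it is quoted from \cite{chernoff}, whose title is precisely ``Perturbations of dissipative operators with relative bound one.'' That title identifies where the content of the statement lies, and it is exactly where your proof stops. Your treatment of the case $b<1$ is correct and standard: in the Hilbert-space setting the identity $\norm{(\lam-\Ac)u}^2 = \lam^2\norm{u}^2 - 2\lam\,\mathrm{Re}\,(\Ac u,u) + \norm{\Ac u}^2$ plus dissipativity does give $\norm{\Ac R_\lam}\le 1$ and $\norm{R_\lam}\le 1/\lam$, the Neumann series applies, and the closure version of Lumer--Phillips finishes (the estimate is Hilbert-space specific, while the theorem is stated for Banach spaces, but the paper's application is on $\Hc = L^2(\Rb,dy)$, so this is a minor restriction). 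The genuine gap is the endpoint $b=1$, which is the entire point of the theorem and is also the case the paper needs: its Proposition yields the relative bound $\eps\norm{\eta}/a^2$, which equals one exactly when $\eps = a^2/\norm{\eta}$. You explicitly defer this case (``making this endpoint passage rigorous is where the real work lies''), so the proof is incomplete precisely where it must do work.

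Moreover, the sketch you offer for $b=1$ would fail as stated. It hinges on the claim that the densely defined adjoint ``inherits dissipativity,'' which is false in general: take $B = d/dy$ on $L^2(0,1)$ with domain the $H^1$ functions vanishing at both endpoints; then $\mathrm{Re}\,(Bu,u)=0$, yet $B^*=-d/dy$ with no boundary conditions satisfies $2\,\mathrm{Re}\,(B^*v,v)=|v(0)|^2-|v(1)|^2>0$ for suitable $v$. Dissipativity of the adjoint is guaranteed only for m-dissipative operators, i.e.\ for generators --- which is the very conclusion you want for $\overline{\Ac^\eps}$, so the duality argument is circular. Nor can a uniform a priori estimate rescue the method of continuity: $\norm{\eps\Bc(\lam-\Ac-t\eps\Bc)^{-1}}$ genuinely grows like $(1-t)^{-1}$ as $t \uparrow 1$. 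Chernoff's actual use of the adjoint hypothesis is the missing idea. Fix $\lam>0$ and $f\in\Hc$, and set $u_t = (\lam-\Ac-t\eps\Bc)^{-1}f$ for $t<1$ (available from your first step). The relative bound with $b\le 1$ and the contraction-resolvent estimate give the uniform bound $(1-t)\norm{\eps\Bc u_t} \le (c/\lam + 2)\norm{f}$, while for every $\psi \in \dom{(\eps\Bc)^*}$ one has $|((1-t)\eps\Bc u_t,\psi)| = (1-t)\,|(u_t,(\eps\Bc)^*\psi)| \to 0$; uniform boundedness plus density of $\dom{(\eps\Bc)^*}$ then upgrades this to weak convergence $(1-t)\eps\Bc u_t \to 0$. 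Hence $(\lam-\Ac-\eps\Bc)u_t = f - (1-t)\eps\Bc u_t \to f$ weakly, so the range of $\lam-\Ac^\eps$ is weakly dense; being a linear subspace, it is therefore norm dense, and the closure version of Lumer--Phillips that you already invoked concludes. It is this weak-compactness/duality step --- not dissipativity of the adjoint, and not a uniform estimate --- that closes the endpoint.
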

\begin{remark}\rm
\label{rmk:dissipative}
{}{Recall, an operator $\Ac$ is \emph{dissipative} if $\text{Re}(u,\Ac u) \leq 0$ for all $u \in \Hc$.}
\end{remark}
\begin{remark}\rm
The operator $\Ac_0$ is the generator of a $C_0$ contraction semigroup $\Pc_t^0 = \exp(t \, \Ac_0)$ on $\Hc$.  {}{Thus, we must (i) show that $\eps \, \eta \, \Ac_1$ has a densely defined adjoint, (ii) show that $\eps \, \eta \, \Ac_1$ is dissipative and (iii) derive conditions under which $\eps \, \eta \, \Ac_1$ is bounded {}{relative} to $\Ac_0$ with {}{relative} bound less than or equal to one.}
\end{remark}
\noindent
{}{To show (i) we note that the adjoint of $\eps \, \eta \, \Ac_1$, given by $\( \eps \, \eta \, \Ac_1 \)^*=\eps \, \Ac_1^* \, \eta$,  has domain $\dom{\eps \, \Ac_1^*\,  \eta}=\Sc$.  As mentioned above, $\Sc$ is densely defined in $\Hc$.  To show (ii), we note that, if an operator satisfies the positive maximum principle
\footnote{An operator $\Ac$ satisfies the \emph{positive maximum principle} if, for any function $f \in \dom{\Ac}$ that attains a maximum at $y^*$ such that $f(y^*) \geq 0$ we have $\Ac f(y^*) \leq 0$.}
then that operator is dissipative (see \cite{ethier1986markov}, Lemma 4.2.1 on page 165).  The following Theorem will be useful.}
\begin{theorem}
Let $\Ac$ be a linear operator with domain $\dom{\Ac}=\Sc$.  Then $\Ac$ satisfies the positive maximum principle if and only if
\begin{align}
\Ac
	&=	\frac{1}{2}a^2(y) \d^2 + b(y) \d 
			+ \int_\Rb \nu(y,dz) \( e^{z \d} - 1 - \Ib_{\{z < R\}} z \d \) 
			- c(y),
			\label{eq:operator.form}
\end{align}
for some $a(x) \geq 0$, $b(x) \in \Rb$, $c(x) \geq 0$, $R \in [0,\infty]$ and $\nu(y,dz)$ satisfying
\begin{align}
\int_\Rb \nu(y,dz) \( 1 \wedge z^2 \) < \infty .
\end{align}
Operators of the form \eqref{eq:operator.form} are called  \emph{L\'evy-type operators}.
\end{theorem}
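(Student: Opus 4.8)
The plan is to establish the two implications separately. Showing that every operator of the form \eqref{eq:operator.form} satisfies the positive maximum principle is elementary and I would verify it directly. The converse is the substantive statement --- it is the classical Courr\`ege--von Waldenfels characterization theorem --- and I would prove it by identifying the symbol of $\Ac$, for each fixed spatial point, with a negative definite function and then invoking the L\'evy--Khintchine representation.

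For the forward direction, let $f\in\Sc$ attain a maximum at $y^*$ with $f(y^*)\geq 0$. The interior extremum conditions give $\d f(y^*)=0$ and $\d^2 f(y^*)\leq 0$, while global maximality gives $f(y^*+z)\leq f(y^*)$ for every $z\in\Rb$. Evaluating \eqref{eq:operator.form} at $y^*$ term by term, I would observe that the diffusion term satisfies $\tfrac{1}{2}a^2(y^*)\d^2 f(y^*)\leq 0$ since $a^2\geq 0$, the drift term vanishes because $b(y^*)\d f(y^*)=0$, the killing term satisfies $-c(y^*)f(y^*)\leq 0$ since $c\geq 0$, and --- using $e^{z\d}f(y^*)=f(y^*+z)$ together with $\d f(y^*)=0$ --- the jump integrand equals $f(y^*+z)-f(y^*)-\Ib_{\{z<R\}}z\,\d f(y^*)=f(y^*+z)-f(y^*)\leq 0$, so that integration against the nonnegative measure $\nu(y^*,\cdot)$ produces a nonpositive contribution. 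Summing the four pieces yields $\Ac f(y^*)\leq 0$, which is the positive maximum principle.

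For the converse, fix $x\in\Rb$ and introduce the symbol $p(x,\xi):=-e^{-i x \xi}\,\big(\Ac\, e_\xi\big)(x)$, where $e_\xi(y):=e^{i\xi y}$. Because $e_\xi\notin\Sc$, the first step is to extend the action of $\Ac$ to bounded smooth functions; here one uses that the positive maximum principle already forces $\Ac$ to be quasi-local, so that $p(x,\xi)$ is well defined through a cutoff and approximation argument. The central step is to show that, for each fixed $x$, the map $\xi\mapsto p(x,\xi)$ is a continuous negative definite function in the sense of Schoenberg. This is exactly where the positive maximum principle enters: applying it to suitable real combinations of the exponentials $e_{\xi_1},\dots,e_{\xi_n}$ that attain their maximum at $x$ yields the conditional positive-definiteness inequalities that characterize negative definite functions. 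Once this is established, the L\'evy--Khintchine representation for negative definite functions gives
\begin{align}
p(x,\xi)
	&=	c(x) - i\,b(x)\,\xi + \tfrac{1}{2}\,a^2(x)\,\xi^2
			+ \int_\Rb \big( 1 - e^{i z \xi} + i z \xi\,\Ib_{\{z<R\}}\big)\,\nu(x,dz) ,
\end{align}
with $c(x)\geq 0$, $a^2(x)\geq 0$, $b(x)\in\Rb$ and $\nu(x,\cdot)$ a nonnegative measure satisfying $\int_\Rb\big(1\wedge z^2\big)\,\nu(x,dz)<\infty$. Inverting the Fourier transform term by term --- matching $\xi^2$ with $-\d^2$, $\xi$ with $-i\,\d$, and $e^{i z\xi}$ with the shift operator $e^{z\d}$ --- recovers exactly the operator form \eqref{eq:operator.form}.

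The main obstacle is the converse, and within it the rigorous construction of the symbol $p(x,\xi)$: since the exponentials $e_\xi$ lie outside the domain $\Sc$, one must first show that the positive maximum principle by itself supplies enough regularity and locality for $\Ac\, e_\xi$ to be meaningful and for the Schoenberg computation to be legitimate. As this is precisely the content of Courr\`ege's theorem, in practice I would either reproduce that argument in full or, since the result is classical, cite it (see \cite{ethier1986markov}) and present only the elementary forward implication in detail.
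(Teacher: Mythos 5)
Your proposal is correct and lands in the same place as the paper: the paper's entire proof of this theorem is the single line ``See Theorem 2.12 of \cite{hoh1998pseudo}'', i.e., it defers to precisely the Courr\`ege--von Waldenfels characterization you identify, so your fallback of citing the classical result is exactly what the paper does. Your added content goes beyond what the paper records and is sound: the term-by-term verification of the forward implication (using $\d f(y^*)=0$, $\d^2 f(y^*)\leq 0$, $f(y^*+z)\leq f(y^*)$, and the nonnegativity of $a^2$, $c$, and $\nu$, so that every piece of $\Ac f(y^*)$ is nonpositive) is a complete and correct proof of that half, and your outline of the converse --- defining the symbol, using the positive maximum principle to obtain negative definiteness in $\xi$ for each fixed $x$, then applying L\'evy--Khintchine and inverting the Fourier correspondence $\xi^2 \leftrightarrow -\d^2$, $e^{iz\xi} \leftrightarrow e^{z\d}$ --- is the standard proof strategy, with the genuine technical obstacle (extending $\Ac$ off $\Sc$ to make sense of $\Ac e_\xi$) correctly flagged rather than glossed over.
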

\qquad \begin{proof}
See Theorem 2.12 of \cite{hoh1998pseudo}.
\qquad \end{proof}
\noindent
{}{The operator $\eps \, \eta \, \Ac_1$ is clearly of the form \eqref{eq:operator.form}.  Hence, $\eps \, \eta \, \Ac_1$ satisfies the positive maximum principle and is therefore dissipative.
Finally, for part (iii), the following Theorem gives conditions under which $\eps \, \eta \, \Ac_1$ is bounded relative to $\Ac_0$ with relative bound one.}
\begin{proposition}
Suppose $\eps \leq \frac{a^2 }{ \left\|\eta \right\|}$ (which is the condition given in Theorem \ref{thm:u.eps}).  Then $\eps \, \eta \, \Ac_1$ is bounded {}{relative} to $\Ac_0$ with {}{relative} bound less than or equal to one.
\end{proposition}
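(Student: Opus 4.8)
The plan is to verify directly the relative-boundedness inequality demanded by the theorem of \cite{chernoff}, namely
\begin{align}
\norm{\eps\,\eta\,\Ac_1 u} \leq c\,\norm{u} + b\,\norm{\Ac_0 u}, \qquad u \in \dom{\Ac_0},
\end{align}
with $c\geq 0$ and $b\leq 1$. The decisive structural observation, read off from \eqref{eq:A0.A1}, is that $\Ac_0$ and $\Ac_1$ are proportional: $\Ac_0 = a^2\,\Ac_1$, so that $\Ac_1 = a^{-2}\Ac_0$ and
\begin{align}
\eps\,\eta\,\Ac_1\,u = \frac{\eps}{a^2}\,\eta\cdot\(\Ac_0 u\).
\end{align}
This collapses the whole question to a single estimate for the multiplication operator $g\mapsto\eta\cdot g$ applied to $g=\Ac_0 u$, and it is precisely this proportionality that makes the ratio $\eps/a^2$, and hence the threshold $\eps=a^2/\norm{\eta}$, govern the relative bound.

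Next I would control the multiplication. Pulling the sup-norm out pointwise, $\int_\Rb|\eta|^2\,|\Ac_0 u|^2\,dy\leq\norm{\Ac_0 u}_\infty^2\,\norm{\eta}^2$, reproduces exactly the $L^2$ quantity $\norm{\eta}=\sqrt{(\eta,\eta)}$ of the hypothesis and gives
\begin{align}
\norm{\eps\,\eta\,\Ac_1 u} \leq \frac{\eps\,\norm{\eta}}{a^2}\,\norm{\Ac_0 u}_\infty .
\end{align}
Since $\eta\in\Sc$, the product stays in $L^2$, so $\dom{\Ac_0}\subseteq\dom{\eps\,\eta\,\Ac_1}$ and the inequality is meaningful on all of $\dom{\Ac_0}$; moreover the hypothesis $\eps\leq a^2/\norm{\eta}$ forces the leading prefactor $\eps\norm{\eta}/a^2$ to be at most $1$, which is the source of the bound $b\leq 1$.

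The main obstacle is reconciling the norm that naturally appears with the one demanded by \cite{chernoff}. The genuinely sharp relative bound of $\eta\cdot\Ac_0$ with respect to $\Ac_0$ is the operator-norm constant $\norm{\eta}_\infty$ (multiplication by a nonzero function is never relatively compact on $L^2(\Rb)$, so one cannot shrink $b$ below this by enlarging $c$), while the displayed estimate carries $\norm{\Ac_0 u}_\infty$ rather than the $L^2$ graph-norm $\norm{\Ac_0 u}$. The delicate step is therefore to convert the sup-norm into the $L^2$ norm without inflating the coefficient of $\norm{\Ac_0 u}$ past $1$: I would attempt a one-dimensional interpolation (Gagliardo--Nirenberg, Agmon-type) of $\norm{\Ac_0 u}_\infty$ against $\norm{\Ac_0 u}$ plus a remainder, absorbing the remainder into the $c\,\norm{u}$ term while checking that the leading coefficient stays $\leq\eps\norm{\eta}/a^2\leq 1$. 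Showing that this conversion goes through with the stated $L^2$ constant — rather than forcing the sup-norm $\norm{\eta}_\infty$ — is where the real work lies, and I would exploit the explicit Fourier-multiplier form of $\Ac_0$ together with the rapid decay of $\eta\in\Sc$ to carry it out.
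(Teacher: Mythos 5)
Your first step is exactly the paper's: both arguments rest on the proportionality $\Ac_0 = a^2\,\Ac_1$, which reduces the claim to bounding the multiplication operator $g \mapsto \eta\, g$ against $\Ac_0$. The divergence comes immediately after, and it is the heart of the matter. The paper finishes in one line by asserting $\norm{\eta\,\Ac_1 u} \leq \norm{\eta}\cdot\norm{\Ac_1 u}$ with $\norm{\eta}=\sqrt{(\eta,\eta)}$ the $L^2$ norm; you declined to write this inequality, and rightly so, because it is false in general: pointwise multiplication by $\eta$ acts on $L^2(\Rb)$ with operator norm $\norm{\eta}_\infty$, not $\norm{\eta}$, and concentrating $g$ near the maximum of $|\eta|$ makes $\norm{\eta\,g}$ as close to $\norm{\eta}_\infty\norm{g}$ as desired, which exceeds $\norm{\eta}\norm{g}$ whenever $\norm{\eta}_\infty > \norm{\eta}$ (e.g.\ for a tall narrow bump $\eta$). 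So the obstacle you flag is not an artifact of your method --- it is a genuine hole in the paper's own one-line proof.

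However, your proposal does not close that hole, and the route you sketch cannot. First, the interpolation step would fail: $\norm{\Ac_0 u}_\infty$ is not dominated by $c\norm{u} + b\norm{\Ac_0 u}$ for any constants, since the graph norm of $\Ac_0$ controls $u$ only in $H^2$, while Agmon/Gagliardo--Nirenberg bounds on $\norm{\Ac_0 u}_\infty$ require $\norm{(\Ac_0 u)'}$, a third-order quantity the graph norm does not see (indeed $\norm{\Ac_0 u}_\infty$ can be infinite on $\dom{\overline{\Ac}_0}=H^2$). Second --- and this is decisive --- your own sharpness remark rules out your program: the exact relative bound of $\eta\,\Ac_0$ with respect to $\Ac_0$ is $\norm{\eta}_\infty$ (high-frequency wave packets supported near the argmax of $|\eta|$ show no smaller $b$ is admissible, no matter how large $c$ is taken), so no argument can deliver the coefficient $\eps\norm{\eta}/a^2$ when $\norm{\eta}_\infty > \norm{\eta}$. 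In that regime, for $a^2/\norm{\eta}_\infty < \eps \leq a^2/\norm{\eta}$, the proposition as stated is actually false. The consistent repair, both here and in Theorem \ref{thm:u.eps}, is to replace $\norm{\eta}$ by $\norm{\eta}_\infty$ (finite since $\eta \in \Sc$) in the hypothesis; with that reading the paper's one-line computation, and your first display, become correct verbatim with $c=0$ and $b = \eps\norm{\eta}_\infty/a^2 \leq 1$. In short: your diagnosis is sharper than the paper's proof, but the completion you propose cannot succeed, and for the $L^2$-norm statement no completion can.
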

\begin{proof}
Clearly, for any $u \in {\rm dom}(\Ac_0)$ we have
\begin{align}
\left\|\eps \, \eta \, \Ac_1 \, u\right\| 
	&\leq \eps \left\|\eta \right\| \cdot \left\| \Ac_1 u \right\| 
	=				\frac{\eps}{a^2}  \left\|\eta \right\| \cdot \left\| \Ac_0 u\right\|
	\leq		 \left\| \Ac_0 u\right\| .
\end{align}
\qquad \end{proof}
\noindent
The proof of Theorem \ref{thm:u.eps} is complete.

%
%


\clearpage
\begin{figure}
\centering
\includegraphics[width=.95\textwidth,height=.5\textheight]{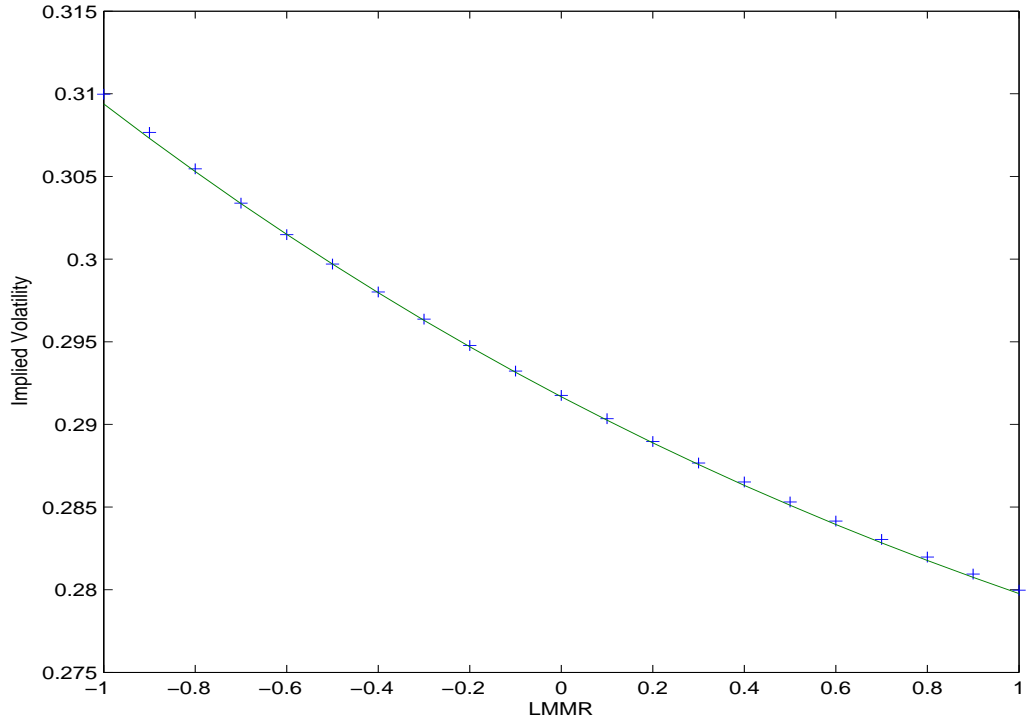}
\caption{{}{We compute $u^\eps$, the prices of set of European call options (i) by using approximation (\ref{eq:uN.approx}) with $N=10$ and (ii) by Monte Carlo simulation.  We then convert the obtained prices to implied volatilities by inverting Black-Scholes numerically.  The results of this procedure are plotted above.  The green line corresponds to implied volatilities computed using approximation (\ref{eq:uN.approx}).  The blue crosses corresponds to implied volatilities computed by Monte Carlo simulation.  The units of the horizontal axis are $\log$-moneyness-to-maturity ratio $\text{LMMR}:=(k-y)/t$.  The following parameters are used in these plots: $y=0.00$, $a=0.25$, $\sqrt{\eps}=0.15$, $\beta=-0.75$, $t=1.0$.  The two methods of computation produce indistinguishable implied volatilities.}}
\label{fig:MonteCarlo}
\end{figure}


\clearpage
\begin{figure}
\centering
\begin{tabular}{ | c | c |}
\hline
$n=1$ & $n=2$ \\
\includegraphics[width=.465\textwidth,height=.25\textheight]{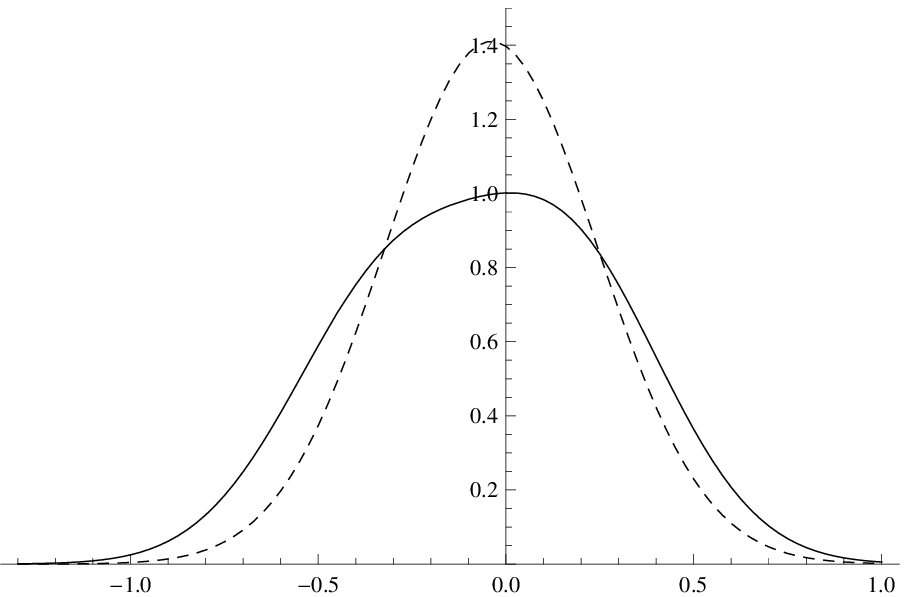} &
\includegraphics[width=.465\textwidth,height=.25\textheight]{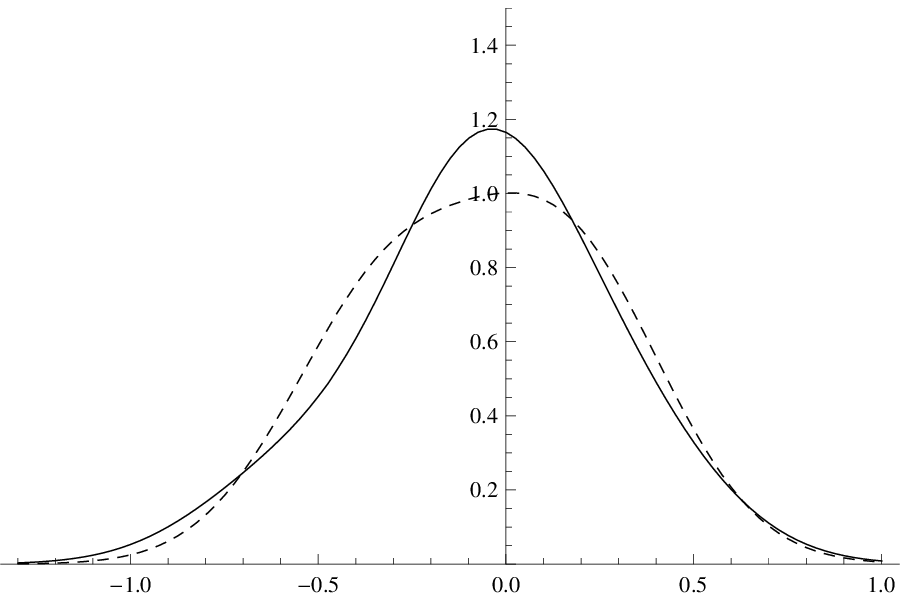} \\ \hline 
$n=3$ & $n=4$ \\
\includegraphics[width=.465\textwidth,height=.25\textheight]{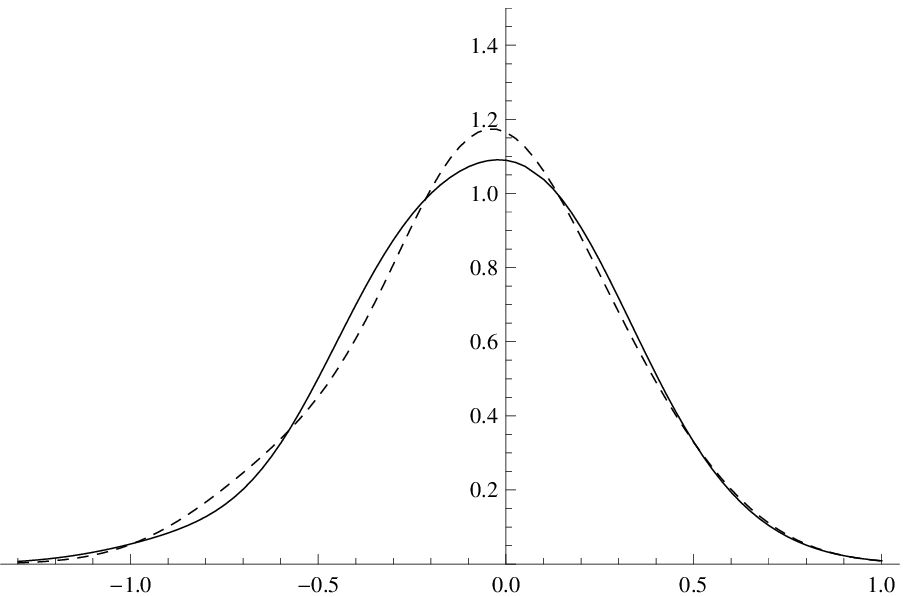} &
\includegraphics[width=.465\textwidth,height=.25\textheight]{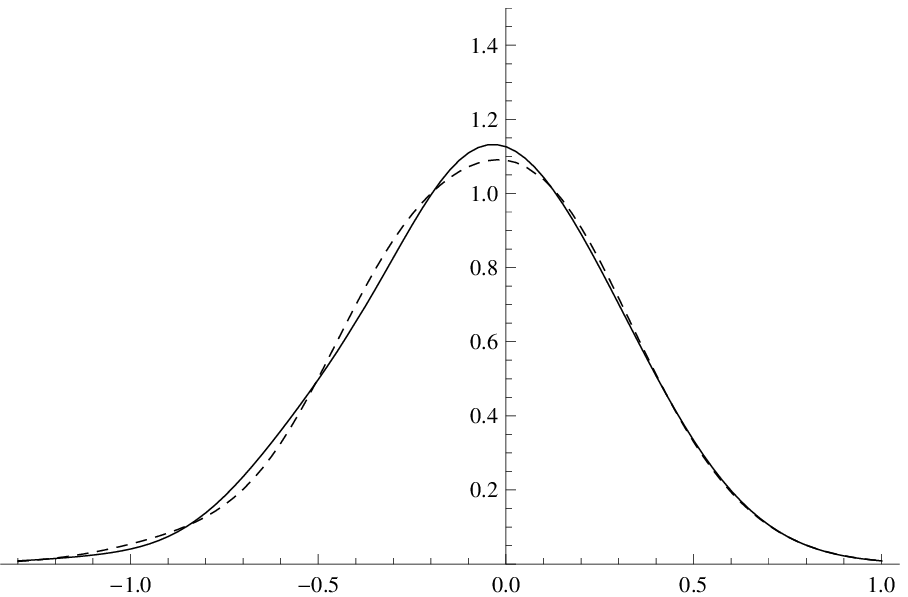} \\ \hline 
$n=5$ & $n=6$ \\
\includegraphics[width=.465\textwidth,height=.25\textheight]{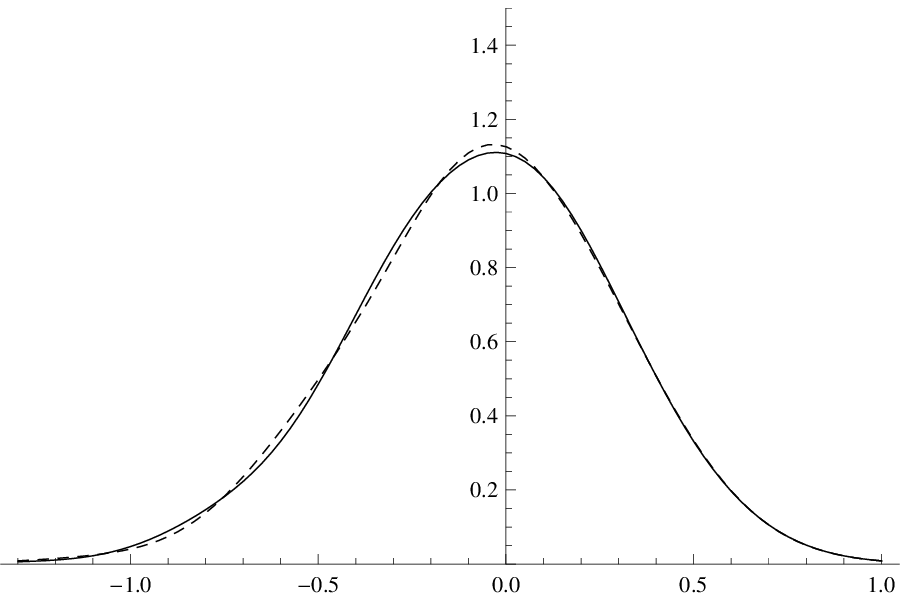} &
\includegraphics[width=.465\textwidth,height=.25\textheight]{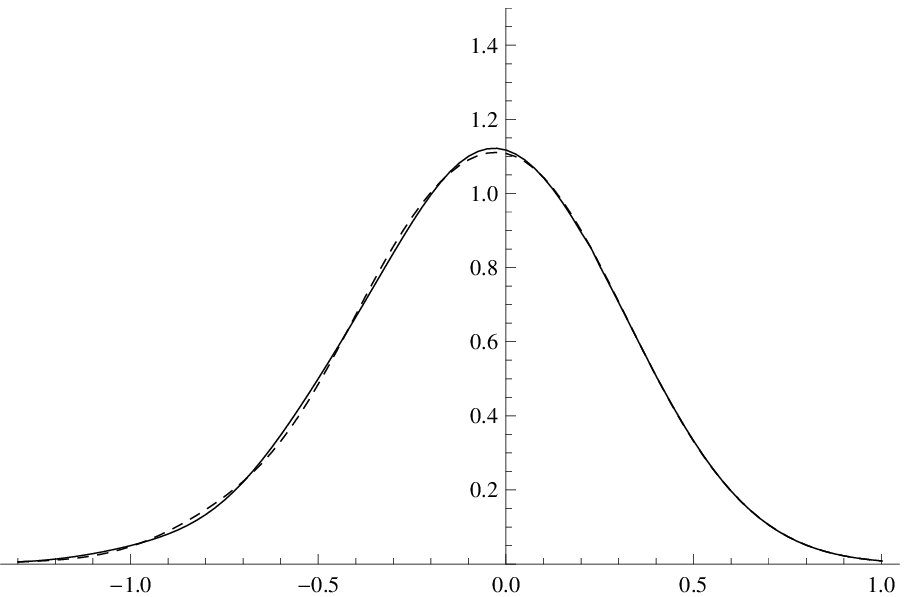} \\ \hline
\end{tabular}
\caption{{}{A plot of the approximate transition density $p^{(n)}(t,y;0)$ for different values of $n$.  In order to see convergence, we plot $p^{(n)}$ (solid) and $p^{(n-1)}$ (dashed) together.  We see almost no difference between $p^{(5)}$ and $p^{(6)}$ (lower right).}  Note that the density of $Y_t$ has a fat tail to the left, which is expected since the local volatility function $\sig(e^y) = (a^2 + \eps e^{\beta y} )^{1/2}$ increases as $y \to -\infty$.  The following parameters are used in these plots: $a=0.20$, $\sqrt{\eps}=0.15$, $\beta=-0.85$, $t=2.0$.}
\label{fig:density}
\end{figure}

\clearpage


\begin{figure}
\centering
\begin{tabular}{ | c | c |}
\hline
$n=2$ & $n=3$ \\
\includegraphics[width=.465\textwidth,height=.25\textheight]{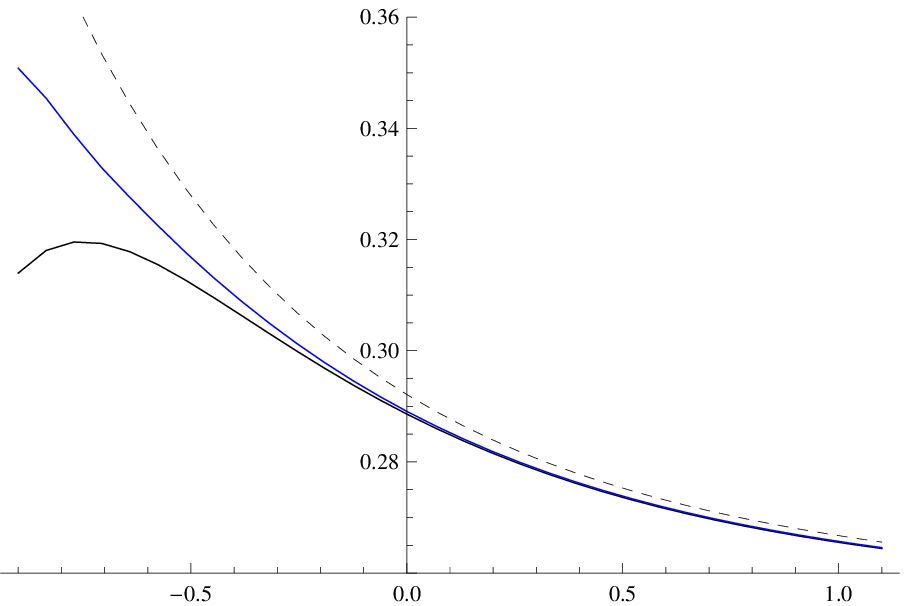} &
\includegraphics[width=.465\textwidth,height=.25\textheight]{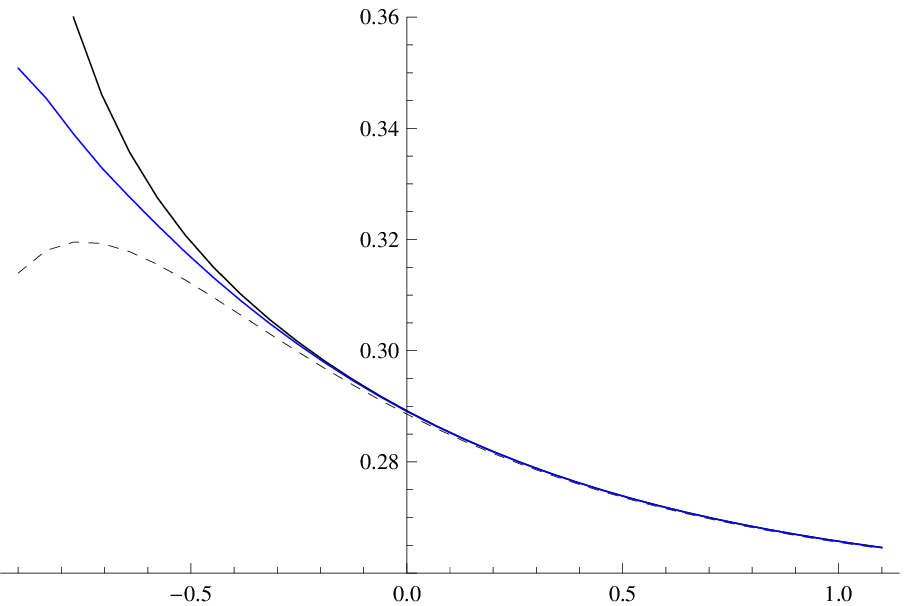} \\ \hline 
$n=4$ & $n=5$ \\
\includegraphics[width=.465\textwidth,height=.25\textheight]{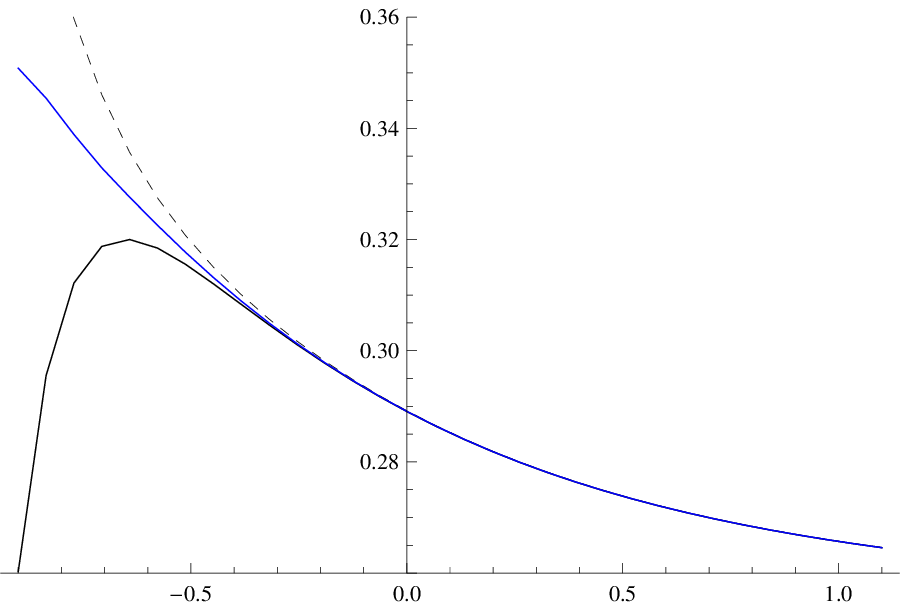} &
\includegraphics[width=.465\textwidth,height=.25\textheight]{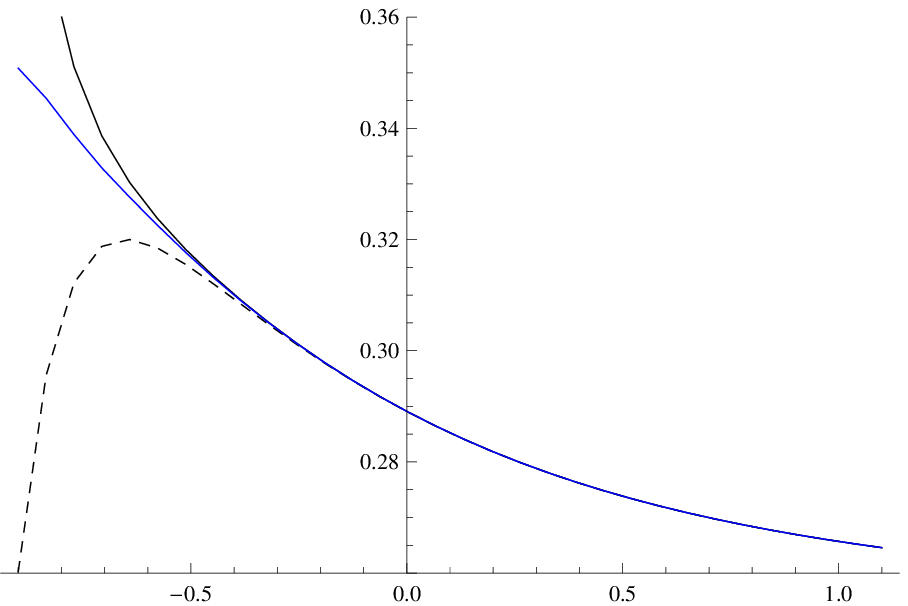} \\ \hline
\end{tabular}
\caption{{}{For different values of $n$, we plot $\sig^{(n)}$ (solid black), $\sig^{(n-1)}$ (dashed black) and $\sig^{\eps}$ (solid blue) as a function of LMMR.  For $\text{LMMR} > -0.5$ we see fast convergence of $\sig^{(n)}$ to $\sig^\eps$.  For $\text{LMMR} < -0.5$, however, convergence is quite slow.  Note that, although $\sig^{(n)}$ \emph{appears} to more closely approximate $\sig^\eps$ for odd $n$ than for even $n$, this is simply due to the fact that, for even $n$, $\sig^{(n)}$ diverges downward, whereas for odd $n$, $\sig^{(n)}$ diverges upward, matching the convexity of $\sig^\eps$.  In fact, the region of convergence, loosely defined as the set of LMMR for which $\sig^{(n)}$ closely approximates $\sig^\eps$, increases for every $n$.  
The following parameters are used in these plots: $a=0.25$, $\sqrt{\eps}=0.15$, $\beta=-0.75$, $t=3.0$ $y=0.1$.}}
\label{fig:impvol}
\end{figure}

\end{document}